\newtheorem{Definition}{Definition}[section]
\newtheorem{Theorem}{Theorem}[section]
\newtheorem{Lemma}{Lemma}[section]
\newtheorem{Problem}{Problem}[section]
\newcommand{\I}{\operatorname{I}}
\newcommand{\env}{\operatorname{env}}
\newcommand{\test}{\operatorname{test}}
\newcommand{\train}{\operatorname{train}}
\newcommand{\squishlist}{
  \begin{list}{$$}
  { \setlength{\itemsep}{0pt}
     \setlength{\parsep}{3pt}
     \setlength{\topsep}{3pt}
     \setlength{\partopsep}{0pt}
     \setlength{\leftmargin}{1.5em}
     \setlength{\labelwidth}{1em}
     \setlength{\labelsep}{0.5em} } }
\newcommand{\squishend}{
  \end{list}  }
\begin{document}

\title{Multivariate Confidence Intervals\thanks{
This work was supported by Academy of Finland (decision 288814) and Tekes (Revolution of Knowledge Work).
\newline
\indent This document is an extended version of~\cite{sdm} to appear in the Proceedings of the 2017 SIAM International Conference on Data Mining (SDM 2017).
\newline
\indent Implementation of the algorithms (Sec. \ref{sec:alg}) and code to reproduce Figs. 1 and 5 are provided at {\tt https://github.com\ /jutako/multivariate-ci.git}
\newline
\indent }}

\author{Jussi Korpela\thanks{Finnish Institute of Occupational Health, Helsinki, Finland} 
\and Emilia Oikarinen$^\dag$  \and Kai Puolam{\"a}ki$^\dag$  \and Antti Ukkonen$^\dag$ 
}

\date{}

\maketitle
\begin{abstract} \small\baselineskip=9pt
  Confidence intervals are a popular way to visualize and analyze
  data distributions. Unlike p-values, they can convey information
  both about
  statistical significance as well as effect size. However, very
  little work exists on applying confidence intervals to multivariate
  data. In this paper we define confidence
  intervals for multivariate data that extend the one-dimensional
  definition in a natural way. In our definition every variable is
  associated with its own confidence interval as usual, but a data
  vector can be outside of a few of these, and still be considered to
  be within the confidence area.  We analyze the problem and show that
  the resulting confidence areas retain the good qualities of their
  one-dimensional counterparts: they are informative and easy to
  interpret. Furthermore, we show that the problem of finding
  multivariate confidence intervals is hard, but provide efficient
  approximate algorithms to solve the problem.

  {\bf Keywords}
  multivariate statistics; confidence intervals; algorithms
\end{abstract}

\section{Introduction}
Confidence intervals are a natural and commonly used way to summarize
a distribution over real numbers.
In informal terms,
a confidence interval is a concise way to
express what values a given sample mostly contains.
They are used widely, e.g., to
denote ranges of data, specify accuracies of parameter estimates, or in
Bayesian settings to describe the posterior distribution.
A confidence interval is given
by two numbers, the lower and upper bound, and
parametrized by the percentage of probability mass that lies within
the bounds.
They are easy to interpret,
because they can be represented
visually together with the data,
and convey information both about
the location as well as the variance of a sample.

There is a plethora of work on how to estimate the
confidence interval of a distribution
based on a finite-sized sample from that distribution, see
\cite{Hyndman96} for a summary.
However, most of these approaches focus on describing
a single univariate distribution over real numbers.
Also, the precise definition of a confidence interval
varies slightly across disciplines and application domains.

In this paper we focus on \emph{confidence areas}: a generalization of 
univariate confidence intervals to multivariate data.
All our intervals and areas are such that they {\em describe ranges of data and are of minimal width}.
In other words, they contain
a given fraction of data within their bounds and 
are as narrow as possible.
By choosing a confidence area with minimal size we essentially locate
the densest part of the distribution.
Such confidence areas are
particularly effective for
visually showing trends, patterns, and outliers.

Considering the
usefulness of confidence intervals, it is surprising how little work
exists on applying confidence intervals on multivariate data
\cite{KorpelaPG14}.
In multivariate statistics {\em confidence regions} are a commonly
used approach,
see e.g.,~\cite{Guilbaud08},
but these methods usually require making assumptions about the underlying distribution.
Moreover, unlike confidence areas, most
conference regions cannot be described simply with an upper and a lower bound,
e.g., confidence regions for multivariate Gaussian data are ellipsoids.
Thus, these approaches do not fully capture 
two useful characteristics of one-dimensional confidence intervals:
a) easy interpretability and b) lack of assumptions about the data.

\begin{figure*}[t!]
  \begin{center}
  	\includegraphics[width=0.45\textwidth]{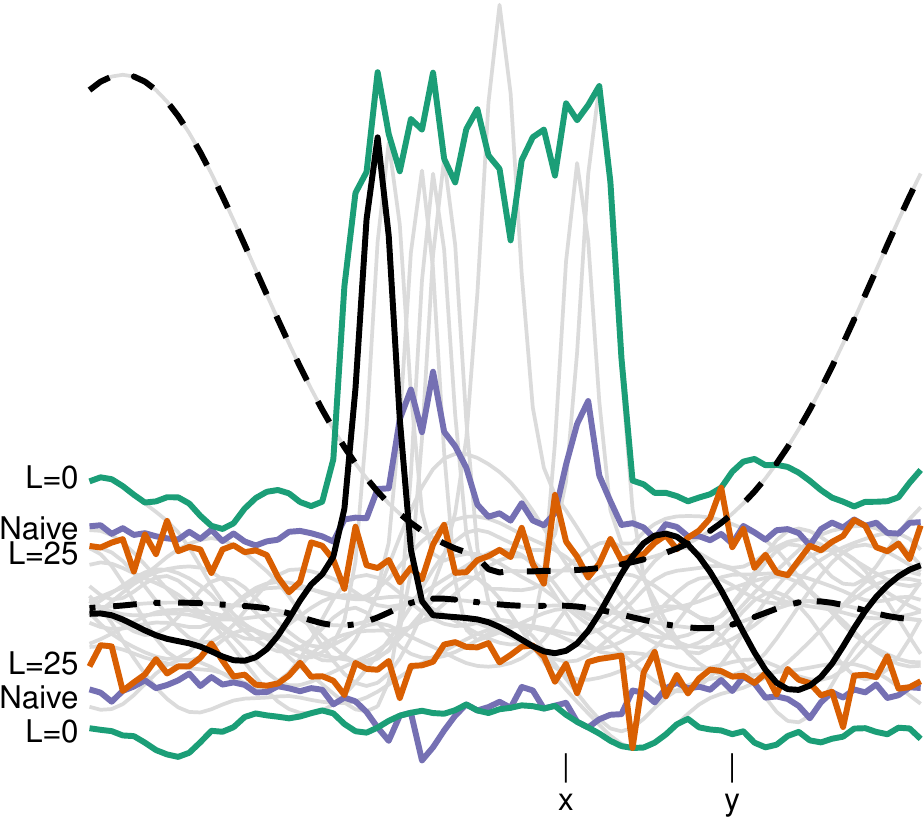}
       \hspace{1em}
        \includegraphics[width=0.45\textwidth]{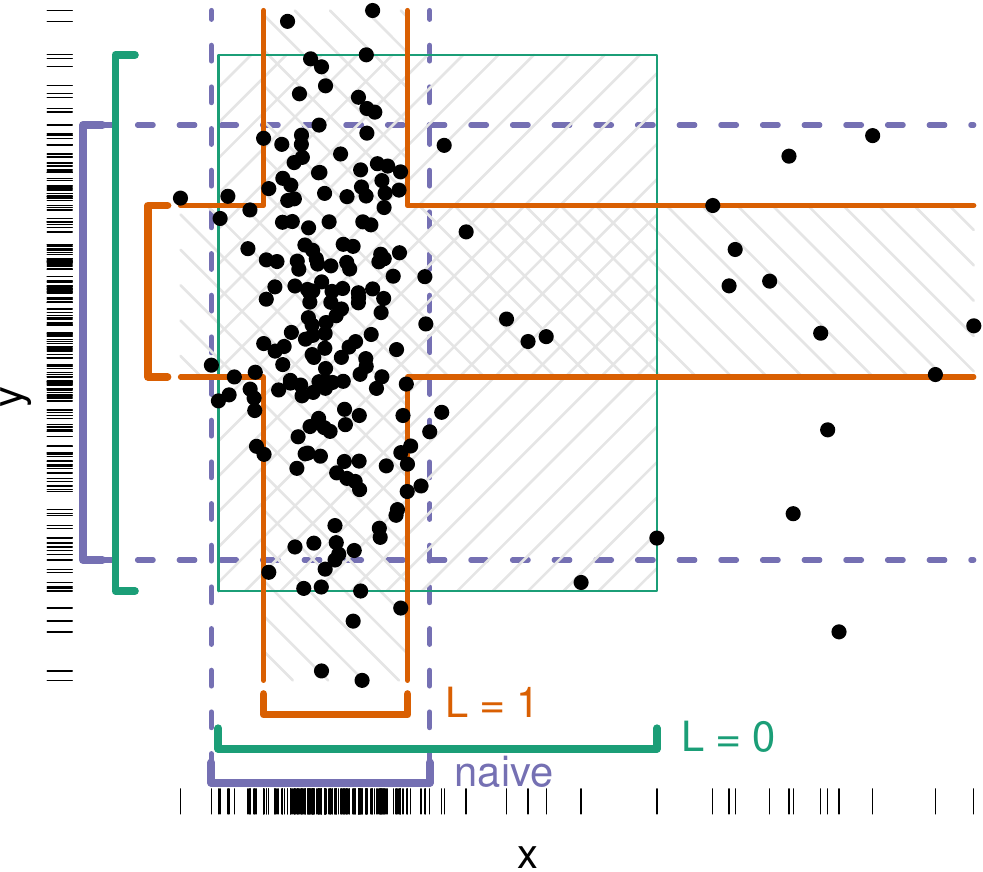}
        \caption{\label{fig:toy} Examples 1 and 2. Please see text for details. LEFT: Local anomalies in time series (solid black line) are easier to spot when computing the confidence area using the proposed method (orange lines) as opposed to existing approaches (green lines). RIGHT: Our approach results in a confidence area (orange ``cross'') that is a better representation of the underlying distribution than existing approaches (green rectangle).}
  \end{center}
\end{figure*}

The simplest approach to construct multivariate confidence intervals
is to compute one-dimensional intervals separately for every variable.
While this naive approach satisfies conditions a) and b) above,
it is easy to see how it fails in general.
Assume, e.g., that we have 10~independently distributed
variables, and have computed for each variable a 90\% confidence interval.
This means that when considering every variable individually,
only 10\% of the distribution lies outside of the respective interval,
as desired.
But taken together,
the probability that an
observation is outside at least one of the confidence intervals is
as much as $1-0.9^{10}=65$\%.
This probability, however, depends strongly on the correlations between the
variables. If the variables were perfectly correlated with a
correlation coefficient of $\pm 1$, the probability of an
observation being outside all confidence intervals would again be 10\%.
In general the correlation structure of the data affects in a
strong and hard-to-predict way on how the naive confidence intervals
should be interpreted in a multivariate setting. 

Ideally a multivariate confidence area 
should retain the simplicity of a one-dimensional confidence interval.
It should
be {\em representable by upper and lower bounds for each variable} and
the semantics should be the same: {\em each observation is either inside
or outside} the confidence area,
and most observations of a sample should be inside.
Particularly important
is specifying when an observation in fact is within the confidence area,
as we will discuss next.

Confidence areas for time series data have been
defined~\cite{Kolsrud2007, KorpelaPG14} in terms of
the {\em minimum width envelope} (MWE) problem:
a time series is within a confidence area if it is
within the confidence interval of {\em every variable}. While this
has desirable properties, it can, however, result in
very conservative confidence areas
if there are local deviations from what constitutes
``normal'' behavior.
The definition in \cite{KorpelaPG14}
is in fact too strict by requiring an observation to
be contained in all variable-specific intervals.

Thus, here
we propose an alternative way to define the confidence
area: {\em a data vector is within the confidence area if it is
outside the variable-specific confidence intervals at most $l$
times}, where $l$ is a given integer.
This formulation preserves easy interpretability:
the user knows that any observation within the confidence area
is guaranteed to violate at most $l$ variable-specific confidence intervals.
In the special case $l=0$,
this new definition coincides with the MWE
problem~\cite{KorpelaPG14}.
The following examples illustrate further properties and uses
of the new definition.

\subsection*{Example 1: local anomalies in time series.}
The left panel of Fig.~\ref{fig:toy}
presents a number of time series over $m = 80$ time points,
shown in light gray,
together with three different types of 90\% confidence intervals,
shown in green, purple and orange, respectively.
In this example,
``normal'' behavior of the time series
is exemplified by the
black dash-dotted curve
that exhibits no noteworthy fluctuation over time.
Two other types of time series are shown also in black:
a clear outlier (dashed black line),
and a time series that exhibits normal behavior most of the time,
but strongly deviates from this for a brief moment (solid black line).

In the situation shown,
we would like the confidence interval
to only show what constitutes ``normal'' behavior,
i.e., not be affected by strong local fluctuations or outliers.
Such artifacts can be caused, e.g., by measurement problems,
or other types of data corruption.
Alternatively, such behavior can also reflect some interesting
anomaly in the data generating process,
and this should not be characterized as ``normal'' by the confidence interval.
In Fig.~\ref{fig:toy} (left)
the confidence area based on MWE \cite{KorpelaPG14},
is shown by the green lines; recall
the MWE interval corresponds to setting $l = 0$.
While it is unaffected by the outlier,
it strongly reacts to local fluctuations.
The naive confidence area,
i.e., one where we have computed
confidence intervals for every time point individually,
is shown in purple.
It is also affected by local peaks,
albeit less than the $l = 0$ confidence area.
Finally,
the proposed method is shown in orange.
The area is computed using $l = 25$, i.e.,
a time series is still within the confidence area
as long as it lies outside
in {\em at most 25 time points}.
This variant focuses on what we would expect to be normal behavior in this case.
Our new definition of a confidence area is thus
nicely suited for {\em finding local anomalies in time series data}.

\subsection*{Example 2: representing data distributions.}
On the other hand,
the right panel of Fig.~\ref{fig:toy}
shows an example where we focus only on the time points $x$ and $y$
as indicated in the left panel.
Time point $x$ resides in the region with a strong local fluctuation,
while at $y$ there are no anomalies.
According to our definition,
an observation,
in this case an $(x,y)$ point,
is within the confidence area if it is outside the variable-specific
confidence intervals at most $l$ times.
We have computed two confidence areas using our method,
one with $l=0$ (green), and another with $l=1$ (orange),
as well as the naive confidence intervals (purple).

For $l=0$, we
obtain the green rectangle in Fig.~\ref{fig:toy} (right panel).
The variable specific
confidence intervals have been chosen so that the green rectangle
contains 90\% the data and the sum of the widths of the confidence
intervals (sides of the rectangle) is minimized.
For $l=1$, we
obtain the orange ``cross'' shaped area.
The cross shape follows from allowing an observation
to exceed the variable specific confidence interval in $l=1$ dimensions.
Again,
the orange cross contains 90\% of all observations, and has
been chosen by greedily minimizing the 
sum of the lengths of the respective variable-specific
confidence intervals.
It is easy to see that with $l = 0$,
i.e., when using the MWE method \cite{KorpelaPG14},
the observations do not occur evenly in
the resulting confidence area (green rectangle).
Indeed, the top right and bottom right parts of the rectangle are mostly empty.
In contrast,
with $l=1$,
the orange cross shaped confidence area
is a much better description of the
underlying data distribution,
as there are no obvious ``empty'' parts.
Our novel confidence area is thus
{\em a better representation of the underlying data distribution}
than the existing approaches.

\subsection*{Contributions}
The basic problem definition we study in this paper is
straightforward: for $m$-dimensional data and the parameters
$\alpha\in[0,1]$ and integer $l$, find a confidence interval for each
of the variables so that the $1-\alpha$ fraction of the observations
lie within the confidence area, defined so that
the sum of the length of the
intervals is minimized,
and an observation
can break at most $l$ of the variable-specific
confidence intervals.
We make the following contributions in this paper:

\begin{itemize}
\item We formally define the problem of finding a multivariate
  confidence area, where observations have to satisfy most but not all
  of the variable-specific confidence intervals.
\item We analyze the computational complexity of the problem, and show
  that it is NP-hard, but admits an approximation algorithm based on a
  linear programming relaxation.
\item We propose two algorithms, which 
  produce good confidence areas in practice.
\item We conduct 
experiments 
demonstrating various
  aspects of multivariate confidence intervals.
\end{itemize}
The rest of this paper is organized as follows: 
Related work is discussed in Sec.~\ref{sec:related}. 
We define the \textsc{ProbCI} and \textsc{CombCI} problems formally in
Sec.~\ref{sec:def}. 
In Sec.~\ref{sec:theory} we study
theoretical properties of the proposed confidence areas, as well as
study problem complexity. Sec.~\ref{sec:alg} describes algorithms
used in experiments in Sec.~\ref{sect:experiments}.  Finally,
Sec.~\ref{sec:concl} concludes this work.

\section{Related work}
\label{sec:related}

Confidence intervals have recently gained more popularity,
as  
they convey information both of statistical significance of
the result {\em as well as the effect size}. In contrast, p-values
give information only of the statistical significance: it is possible
to have statistically significant results that are meaningless in
practice due to the small effect size. The problem 
has been long and
acutely recognized, e.g., in medical research \cite{Gardner86}. Some
psychology journals have recently  
banned the use of p-values
\cite{Woolston15,Trafimow15}. The proposed solution is not to report
p-values at all, but use confidence intervals instead \cite{Nuzzo14}.

Simultaneous confidence intervals for time series data have been
proposed \cite{Kolsrud2007, KorpelaPG14}. These correspond to the 
confidence areas in this paper when $l=0$.
The novelty here 
generalization of the confidence areas to allow outlying dimensions
($l>0$), similarly to \cite{Wolf2015}, and the related theoretical and
algorithmic contributions, allowing for narrower confidence intervals
and in some cases more readily interpretable results. Simultaneous
confidence intervals with $l=0$ were 
in \cite[p. 154]{Davidson97}, and studied \cite{Mandel08} by using the
most extreme value within a data vector as a ranking
criterion. Another examples of $l=0$ type procedures include
\cite{Liu07,Schussler16}. In the field of information visualization and the
visualization of time series confidence areas have been used
extensively; see \cite{Aigner11} for a review.  An interesting
approach is to construct simultaneous confidence regions by inverting
statistical multiple hypothesis testing methods, see
e.g.,~\cite{Guilbaud08}.

The approach presented in this paper
allows some dimensions of an observation to be partially outside the
confidence area. This is in the same spirit---but not equivalent---to
false discovery rate (FDR) in multiple hypothesis testing, which also
allows a controlled fraction of positives to be ``false positives''.
In comparison, the approach in \cite{KorpelaPG14} is more akin to
family-wise error rate (FWER) that controls the probability of at
least one false discovery.

\section{Problem definition}
\label{sec:def}

A {\em data vector} $x$ is a vector in ${\mathbb{R}}^m$ and $x(j)$
denotes the value of $x$ in $j$th position.  Let matrix
$X\in{\mathbb{R}}^{n\times m}$ be a dataset of $n$ data vectors
$x_1,\ldots,x_n$, i.e. rows of $X$.  We start by defining the
confidence area, its size, and the envelope for a dataset $X$.

\begin{Definition}
Given $X\in{\mathbb{R}}^{n\times m}$, a {\em confidence area} for $X$
is a doublet of vectors $(x_l,x_u)$, $x_l,x_u\in {\mathbb{R}}^m$
composed of lower and upper bounds satisfying $x_l(j)\le x_u(j)$ for
all $j$, respectively.  The {\em size} of the confidence area is
$A=\sum_{j=1}^m{w(j)}$, where $w(j)=x_u(j)-x_l(j)$ is the {\em width}
of the confidence interval w.r.t.\ $j$th position.  The {\em envelope}
of $X$ is a confidence area denoted by $\env(X)=(x_l,x_u)$, where
$x_l(j)=\min_{i=1}^n{x_i(j)}$ and $x_u(j)=\max_{i=1}^n{x_i(j)}$.
\end{Definition}
We define the error of the confidence area as the number of outlying
data vectors as follows.
\begin{Definition}
Let $x$ be a data vector in $\mathbb{R}^m$ and $(x_l,x_u)$ a
confidence area.  The {\em error} of $x$ given the confidence area is
defined as
\begin{equation}
\nonumber
V(x \mid x_u, x_l) = \sum_{j=1}^m \I \left[x(j)<x_l(j)\vee x_u(j)<x(j)
  \right].
\end{equation}
\end{Definition}
The indicator function $\I[\Box]$ is unity if the condition
$\Box$ is satisfied and zero otherwise.

The main problem 
in this paper is as follows.
\begin{Problem}[\textsc{ProbCI}]
\label{prob:pci}
Given $\alpha \in [0,1]$, an integer~$l$, and a distribution $F$ over
$\mathbb{R}^m$, find a confidence area $(x_l,x_u)$ that minimizes
$\sum_{j=1}^m w(j)$ subject to 
constraint
\begin{equation}
  \Pr_{x \sim F}\left[ V(x \mid x_u, x_l) \le l \right] \ge 1-\alpha.
  \label{eq:p1}
\end{equation}
\end{Problem}
For this problem definition to make sense, the variables or at
least their scales must be comparable. Otherwise variables
with high variance will dominate the confidence areas. Therefore, some
thinking and a suitable preprocessing, such as normalization of
variables, should be applied before solving for the confidence area
and interpreting it.
 
The combinatorial problem 
is defined 
as follows.
\begin{Problem}[\textsc{CombCI}]
\label{prob:cci}
Given integers $k$ and $l$, and $n$ vectors $x_1, \ldots, x_n$ in $\mathbb{R}^m$, find a confidence area $(x_l,x_u)$ by
minimizing $\sum_{j=1}^m w(j)$ subject to constraint
\begin{equation}
\label{eq:cciconstraint}
\sum_{i=1}^n \I \left[ V(x_i \mid x_u, x_l ) \le l \right] \ge n-k.
\end{equation}
\end{Problem}
Any confidence area satisfying \eqref{eq:cciconstraint} is called a
\emph{$(k,l)$-confidence area}.  In the special case with $l=0$,
Problems~\ref{prob:pci} and \ref{prob:cci} coincide with the minimum
width envelope problem from \cite{KorpelaPG14}.  The problem
definition with non-vanishing $l$ is novel to best of our knowledge.
The relation of Problems~\ref{prob:pci}~and~\ref{prob:cci} is as
follows.
\begin{Definition}
Problems \ref{prob:pci} and \ref{prob:cci} {\em match} for a given
data from distribution $F$ and parameters $\alpha$, $k$, and $l$, if a
solution of Problem \ref{prob:cci} satisfies Eq. \eqref{eq:p1} with
equality for a previously unseen test data from distribution $F$.
\label{def:match}
\end{Definition}
We can solve Problem \ref{prob:pci} by solving Problem \ref{prob:cci}
with different values of $k$ to find a $k$ that matches the given
$\alpha$, as done in Sec. \ref{sec:match} or \cite{KorpelaPG14}.

\section{Theoretical observations}
\label{sec:theory}

\subsection{Confidence areas for uniform data}
\label{sec:uniform}

It is instructive to consider the behavior of Problem \ref{prob:pci}
with the uniform distribution $F_{unif}=U(0,1)^m$. 
We
show that a solution may contain very narrow confidence intervals and
discuss the required number of data vectors to estimate a confidence
area with a desired level of $\alpha$.

Consider Problem \ref{prob:pci} in a simple case of two-dimensional
data with $m=2$, $l=1$, and $\alpha=0.1$. In this case an optimal
solution to Problem \ref{prob:pci} is given by confidence intervals 
with $w(1)=0.9$ and $w(2)=0$,  
resulting in the
size of
confidence area of $\sum_{j=1}^2{w(j)}=0.9$. A solution with
confidence intervals of equal width, i.e., $w(1)=w(2)=0.68$ would  
lead to substantially larger area of $1.37$. As this simple
example demonstrates, if data is unstructured or if some of the
variables have unusually high variance, we may obtain solutions where
some of the confidence intervals are very narrow. In the case of
uniform distribution the choice of variables with zero width
confidence intervals is arbitrary: e.g., in the example above we could
as well have chosen $w(1)=0$ and $w(2)=0.9$. Such narrow intervals are
not misleading, because they are easy to spot: for a narrow
interval---such as the one discussed in this paragraph---only a small
fraction of the values of the variable lie within it. In real data
sets with non-trivial marginal distributions and correlation structure
the confidence intervals are often of roughly similar width.

Next we  
consider the behavior of the
problem at the limit of high-dimensional data.
\begin{Lemma}
  The solution with confidence intervals of equal width
  $w=w_j=x_u(j)-x_l(j)$ corresponds to $\alpha$ of
  \begin{equation}
    \alpha=1-BC(l,m,w),
    \label{eq:bc}
   \end{equation}
  where $BC(l,m,w)=\sum_{j=0}^l \binom{m}{j}(1-w)^jw^{m-j}$ is the cumulative binomial
  distribution.
  \label{lem:eq}
\end{Lemma}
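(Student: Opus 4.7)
The plan is to exploit the independence structure of $F_{unif}=U(0,1)^m$ and reduce the event $\{V(x\mid x_u,x_l)\le l\}$ to a tail event of a binomial random variable.

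First I would observe that under $F_{unif}$ the coordinates $x(1),\ldots,x(m)$ are mutually independent, each uniform on $[0,1]$. If every confidence interval has width $w$ and lies within $[0,1]$ (this placement is without loss of generality since the marginals are translation-invariant on the unit interval in the relevant sense, and any optimal solution of $\textsc{ProbCI}$ for $F_{unif}$ can be taken to satisfy this), then for each coordinate $j$ we have $\Pr[x(j)\in[x_l(j),x_u(j)]]=w$ and hence $\Pr[x(j)<x_l(j)\vee x_u(j)<x(j)]=1-w$.

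Next I would note that by definition $V(x\mid x_u,x_l)=\sum_{j=1}^m Z_j$, where $Z_j=\I[x(j)<x_l(j)\vee x_u(j)<x(j)]$ are independent Bernoulli random variables with success probability $1-w$. Thus $V(x\mid x_u,x_l)$ is distributed as $\mathrm{Binomial}(m,1-w)$, and
\begin{equation}
\nonumber
\Pr_{x\sim F_{unif}}[V(x\mid x_u,x_l)\le l]=\sum_{j=0}^l \binom{m}{j}(1-w)^j w^{m-j}=BC(l,m,w).
\end{equation}
Finally, the constraint \eqref{eq:p1} in Problem~\ref{prob:pci} reads $\Pr[V\le l]\ge 1-\alpha$; for the equal-width solution the smallest $\alpha$ that still satisfies this is obtained at equality, giving $1-\alpha=BC(l,m,w)$, i.e.\ $\alpha=1-BC(l,m,w)$, which is \eqref{eq:bc}.

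The only mildly delicate step is the implicit claim that placing the equal-width intervals inside $[0,1]$ is justified; this is immediate because any interval that protrudes outside $[0,1]$ can be shifted inward without decreasing the coverage probability of each marginal, so the stated equality $\Pr[x(j)\in[x_l(j),x_u(j)]]=w$ holds for the relevant configurations. The remainder is a one-line identification with the cumulative binomial.
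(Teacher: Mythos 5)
Your proof is correct and is essentially the argument the paper intends: in the context of Section~\ref{sec:uniform} the data is $F_{unif}=U(0,1)^m$, so the per-coordinate exclusion events are independent Bernoulli$(1-w)$ variables, $V(x\mid x_u,x_l)\sim\mathrm{Binomial}(m,1-w)$, and the constraint \eqref{eq:p1} at equality gives $\alpha=1-BC(l,m,w)$. The paper states this lemma without an explicit proof, and your derivation (including the remark that equal-width intervals can be placed inside $[0,1]$ without loss of coverage) fills it in correctly.
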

\begin{Lemma}
  If $n$ vectors are sampled from $F_{unif}=U(0,1)^m$ then the
  expected width of the envelope for each variable is
  $w=\frac{n-1}{n+1}$. The probability of more than $l$ variables from a
  vector from $F_{unif}$ being outside the envelope is given by
  Eq. \eqref{eq:bc} with $w=\frac{n-1}{n+1}$.
  \label{lem:n}
\end{Lemma}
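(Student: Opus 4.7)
The plan is to establish the two claims separately, using only elementary properties of uniform order statistics and exchangeability, and then invoke Lemma~\ref{lem:eq} for the final form.

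First I would prove the envelope width. For $n$ i.i.d.\ draws from $U(0,1)$ the standard order-statistic identities give $E[\min]=1/(n+1)$ and $E[\max]=n/(n+1)$. Under $F_{unif}$ the coordinates of the $n$ data vectors are independent $U(0,1)$, so the envelope width in coordinate $j$ is $\max_i x_i(j)-\min_i x_i(j)$, and its expectation is $n/(n+1)-1/(n+1)=(n-1)/(n+1)$. This is essentially a one-line computation.

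Next I would compute, for a fresh sample $x \sim F_{unif}$, the probability that $x(j)$ falls outside the envelope in coordinate $j$. The key trick is to avoid conditioning on the random envelope and to exploit exchangeability instead. Consider the $n+1$ i.i.d.\ $U(0,1)$ values $x_1(j), \ldots, x_n(j), x(j)$; the new value $x(j)$ lies outside the envelope iff it is either the smallest or the largest of these $n+1$ values. By exchangeability each of the $n+1$ positions is equally likely to produce the min or the max, so this probability equals $2/(n+1)$. Writing $w=(n-1)/(n+1)$, the per-coordinate violation probability is exactly $1-w$, matching the ``width'' interpretation used in Lemma~\ref{lem:eq}.

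Finally, I would combine these per-coordinate events across the $m$ dimensions. Under $F_{unif}$ all coordinates of all samples (the $n$ training vectors together with the new vector $x$) are mutually independent, so the $m$ events ``$x$ violates coordinate $j$'' are jointly independent, each with probability $1-w$. Hence the total number of violated coordinates is $\mathrm{Binomial}(m,1-w)$, and the probability of more than $l$ violations equals $1-BC(l,m,w)$ with $w=(n-1)/(n+1)$, which is Eq.~\eqref{eq:bc} by Lemma~\ref{lem:eq}. The main subtlety worth flagging is in the second step: conditioning on the envelope first would force one to compute a binomial probability with random widths and then integrate over the joint law of the $m$ (non-independent) envelope widths. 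Exchangeability sidesteps this entirely, and independence of coordinates under $F_{unif}$ makes the joint count exactly binomial without any further work.
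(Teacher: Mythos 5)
Your proof is correct. The paper itself states Lemma~\ref{lem:n} without providing a proof (the appendix only proves the complexity results), so there is nothing to compare against; your argument fills that gap cleanly. Both steps check out: the order-statistic computation gives $E[\max]-E[\min]=(n-1)/(n+1)$, and the exchangeability trick correctly yields a per-coordinate violation probability of $2/(n+1)=1-w$ for the fresh vector without ever conditioning on the random envelope. The point you flag at the end is exactly the right one to flag, and is worth emphasizing: because each event ``$x$ violates coordinate $j$'' is a function of the independent block $\left(x_1(j),\ldots,x_n(j),x(j)\right)$, these $m$ events are genuinely jointly independent, so the violation count is \emph{exactly} $\mathrm{Binomial}\left(m,\,2/(n+1)\right)$ and the stated probability equals Eq.~\eqref{eq:bc} at $w=(n-1)/(n+1)$ exactly --- not merely as an approximation obtained by substituting the expected width into Lemma~\ref{lem:eq}. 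Had the coordinates been dependent, or had one conditioned on the envelope first, one would instead face a mixture of Poisson-binomials with random, correlated widths, and the clean binomial form would be lost.
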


One implication of Lemma \ref{lem:n} is that there is a limit to the
practical accuracy that can be reached with a finite number of
samples. The smallest $\alpha$ we can hope to realistically reach is
the one implied by the envelope of the data, unless we make some
distributional assumptions of the shape of the distribution outside
the envelope. Conversely, the above lemmas define a minimum number of
samples needed for uniform data to find the confidence area for a
desired level of $\alpha$.

In the case of $l=0$ to reach the accuracy of $\alpha$,
we have $w^m\approx 1-\alpha$, or $n\approx
-2m/\log{\left(1-\alpha\right)}\approx 2m/\alpha$, where we have
ignored higher order terms in $1/m$ and $\alpha$. For a typical choice
of $\alpha=0.1$ and $m=100$ this would imply that at least $n\approx
2000$ data vectors are needed to estimate the confidence area; the
number of data vectors needed increases linearly with the
dimensionality $m$.

On the other hand, for a given $\beta\in]0,1[$, if we let
    $l=\lfloor\beta m\rfloor$, a solution where the width of the
    envelope is $w\approx 1-\beta$ is asymptotically sufficient when
    the dimensionality $m$ is large, in which case the number of data	
    vectors required is $n\approx 2/\beta$. For a value of $\beta=0.1$
    only $n\approx 20$ data vectors would therefore be needed even for
    large $m$. As hinted by the toy example in Fig.~\ref{fig:toy}, a
    non-vanishing parameter $l$ leads at least in this example to
    substantially narrower confidence intervals and, hence, make it
    possible to compute the confidence intervals with smaller data
    sets.

\subsection{Complexity and approximability}

We continue by briefly addressing the computational properties of the
\textsc{CombCI} problem.
The proofs are provided in the appendix.

\begin{Theorem}
\label{thr:np}
\textsc{CombCI} is NP-hard for all $k$.
\end{Theorem}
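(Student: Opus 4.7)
The plan is to give a polynomial-time reduction from an NP-hard problem to \textsc{CombCI}. The natural starting point is the minimum-width envelope (MWE) problem, which coincides with \textsc{CombCI} restricted to $l=0$ and is proved NP-hard for every $k$ in \cite{KorpelaPG14}. This immediately handles the case $l=0$. For $l \ge 1$, I would design a padding reduction that takes an MWE instance and attaches $l$ auxiliary coordinates arranged so that the $l$ per-vector violation slots are necessarily consumed on the auxiliary side, reducing the problem on the original coordinates to MWE.

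Concretely, given an MWE instance $(X,k)$ with $X\in\mathbb{R}^{n\times m}$, I would (i) replace each row of $X$ by $l+1$ identical copies to obtain $X_{\mathrm{dup}}\in\mathbb{R}^{n(l+1)\times m}$ and scale $k$ to $k'=k(l+1)$; and (ii) append $l$ auxiliary coordinates on which the $n(l+1)$ vectors occupy well-separated positions spanning a range far larger than the diameter of the envelope of $X$. Two complementary mechanisms then make the padded instance equivalent to the MWE instance. First, the wide spacing on auxiliary coordinates makes any positive-width auxiliary interval too expensive to be profitable, so an optimal solution chooses zero-width auxiliary intervals; placed off the vectors' auxiliary values, these force every vector to incur exactly $l$ auxiliary violations, leaving zero budget for original-coordinate violations. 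Second, the $(l+1)$-fold duplication ensures that even if the optimizer places a few zero-width auxiliary intervals cleverly (matching specific vectors' auxiliary values to ``save'' them one slot), at least one copy of every non-outlier original vector remains fully inside on the original coordinates, so the envelope width on the original coordinates is unchanged.

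The main obstacle is the soundness direction of this equivalence: carefully ruling out that the optimizer extracts strictly narrower envelopes on the original coordinates by combining the $l$ extra violation slots with the duplication. The exchange argument I have in mind shows that in order to relax the envelope in some original coordinate, the optimizer would have to turn all $l+1$ copies of the corresponding extremal vector into either outliers or saved non-outliers, costing at least $l$ of the $k'=k(l+1)$ outlier slots per removed original vector; accounting for these costs precisely shows the padded optimum equals the MWE optimum on $(X,k)$. Since the construction is polynomial in $n$, $m$, and $l$, and preserves the objective value, it yields a Karp reduction and establishes that \textsc{CombCI} is NP-hard for every $k$ and every $l\ge 0$.
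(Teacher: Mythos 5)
Your reduction breaks down at exactly the one case that requires new work: $k=0$. The premise that the minimum width envelope problem (\textsc{CombCI} with $l=0$) is NP-hard ``for every $k$'' is false at $k=0$: with $k=0$ and $l=0$ the optimal confidence area is simply the envelope of the data, computable in $O(nm)$ time. Consequently your padding reduction, which maps an MWE instance with parameter $k$ to a \textsc{CombCI} instance with $k'=k(l+1)$, sends the trivial instance $k=0$ to the instance $k'=0$ and therefore certifies nothing there. But $k=0$ with $l\ge 1$ is precisely the new content of the theorem; for $k>0$ one can simply set $l=0$ and invoke \cite[Theorem 3]{KorpelaPG14} directly, with no padding needed. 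The paper closes the $k=0$ gap with a construction of a different flavour: a reduction from \textsc{Vertex-Cover} in which each edge $(a,b)$ becomes a $0/1$ vector with ones in positions $a$ and $b$, two auxiliary vectors with large negative entries pin the lower bound at zero, and with $k=0$, $l=1$ the requirement that every edge-vector violate at most one coordinate forces $x_u$ to be the indicator vector of a vertex cover, so minimizing $\sum_j x_u(j)$ is exactly minimum vertex cover. You need a reduction whose hardness routes through the per-vector violation budget $l$ rather than through the number $k$ of excluded vectors; yours does not.

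For $k\ge 1$ and $l\ge 1$ your duplication-plus-auxiliary-coordinates construction proves more than the theorem asks (hardness for each fixed pair $(k,l)$ rather than for each $k$ with some $l$), and the soundness step you flag is genuinely delicate: the at most $l$ ``saves'' obtainable by placing zero-width auxiliary intervals on specific vectors' values hand those copies spare violation slots usable on the original coordinates, so the accounting must show that shrinking past any original vector still forces at least $l+1$ minus the globally available saves of its copies to become outliers, whence at most $k$ original vectors can be removed in total. That bookkeeping looks repairable, but as written the proposal does not establish the theorem, because the $k=0$ case is left unproven.
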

For $k>0$ the result directly follows from \cite[Theorem 3]{KorpelaPG14}, and
for $k=0$ and $l>0$ a reduction from \textsc{Vertex-Cover} can be used.

Now, there exists a {\em polynomial time approximation
 algorithm} for solving a variant of \textsc{CombCI} in the special
case of $k=0$. In particular, we consider here a {\em one-sided}
confidence area, defined only as the {\em upper bound} $x_u$, the
lower bound $x_l$ is assumed to be fixed, e.g., at zero, or any other
suitable value. 
This complements the earlier result that
the {\em complement} of the objective function of \textsc{CombCI} is
hard to approximate for $k>0$ and $l=0$ \cite[Theorem 3]{KorpelaPG14}.
\begin{Theorem}
\label{thr:app}
There is a polynomial time $(l+1)$ approximation algorithm for the
one-sided \textsc{CombCI} problem with $k=0$.
\end{Theorem}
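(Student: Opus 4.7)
The plan is a linear-programming relaxation with threshold rounding. Assume without loss of generality that the fixed lower bound is $x_l(j)=0$ and that every $x_i(j)\ge 0$ (a harmless coordinate shift; data values strictly below the original $x_l$ already contribute to $V$ and can be folded into a per-point effective budget). Then one-sided \textsc{CombCI} with $k=0$ asks for $x_u\in\mathbb{R}_{\ge 0}^m$ minimising $\sum_j x_u(j)$ subject to $|\{j : x_i(j)>x_u(j)\}|\le l$ for every $i$.

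I would first write the natural LP relaxation. Introduce fractional cut indicators $z_{ij}\in[0,1]$, intended to equal $1$ when $x_i(j)>x_u(j)$, and solve
\[
\min \sum_j x_u(j) \text{ s.t.\ } x_u(j)\ge x_i(j)(1-z_{ij})\ \forall i,j,\quad \sum_j z_{ij}\le l\ \forall i.
\]
The first family forces $z_{ij}=0 \Rightarrow x_u(j)\ge x_i(j)$. Any integer feasible solution yields a feasible LP solution of equal cost, so LP-OPT $\le$ OPT; the LP has $O(nm)$ variables and constraints, hence is polynomially solvable.

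Given an LP optimum $(x_u^\ast,z^\ast)$, the rounding sets $\hat z_{ij}=1$ iff $z^\ast_{ij}>l/(l+1)$ and $\hat x_u(j)=\max\{x_i(j):\hat z_{ij}=0\}$ (or $0$ if that set is empty). The analysis is two short observations. \emph{Feasibility:} $\hat z_{ij}=1$ forces $z^\ast_{ij}>l/(l+1)$, so
\[
\tfrac{l}{l+1}\,|\{j:\hat z_{ij}=1\}| < \sum_j z^\ast_{ij} \le l,
\]
hence $|\{j:\hat z_{ij}=1\}|\le l$; since $\hat z_{ij}=0$ forces $\hat x_u(j)\ge x_i(j)$ by construction, the actual number of violating coordinates of $x_i$ is at most $l$. \emph{Cost:} whenever $\hat z_{ij}=0$ we have $1-z^\ast_{ij}\ge 1/(l+1)$, so the LP constraint gives $x_u^\ast(j)\ge x_i(j)(1-z^\ast_{ij})\ge x_i(j)/(l+1)$; therefore $\hat x_u(j)\le(l+1)x_u^\ast(j)$, and summing yields $\sum_j\hat x_u(j)\le(l+1)\,\text{LP-OPT}\le(l+1)\,\text{OPT}$.

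The main obstacle is to pick the LP and the rounding threshold so that the cut budget and the cost blow-up are controlled simultaneously. The value $l/(l+1)$ is the unique sweet spot: a larger threshold would allow $|\{j:\hat z_{ij}=1\}|$ to reach $l+1$ and break feasibility, while a smaller one would weaken the complementary slack $1-z^\ast_{ij}\ge 1/(l+1)$ that is needed to bound $\hat x_u(j)$ by $(l+1)x_u^\ast(j)$. Once the LP and threshold are in place, both guarantees follow from the two-line calculation above.
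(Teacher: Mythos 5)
Your proof is correct, but it is built on a different LP than the paper's. The paper's ILP eliminates the upper bound $x_u$ entirely: it sorts the data at each position, introduces gap coefficients $\Delta_i(j)$ between consecutive sorted values, uses \emph{inclusion} indicators $q_i(j)$ with objective $\sum_{i,j}q_i(j)\Delta_i(j)$, and must add explicit monotonicity constraints $q_{i'}(j)\ge q_i(j)$ along the sorted order so that the rounded solution has no ``holes'' (this is the paper's Lemma A.1). Your LP instead keeps $x_u(j)$ as an explicit continuous variable with the coupling constraints $x_u(j)\ge x_i(j)(1-z_{ij})$ and a cut budget $\sum_j z_{ij}\le l$; after rounding you reconstruct $\hat x_u(j)$ directly as the maximum over uncut points, which makes the no-holes issue vanish and shortens the argument. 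The two formulations are equivalent in substance, and your threshold ($z^\ast_{ij}>l/(l+1)$, i.e.\ $q^\ast_{ij}<1/(l+1)$) is exactly the paper's threshold written for the complementary variable; the feasibility and cost calculations mirror the paper's Lemmas A.2 and A.3. What your version buys is a cleaner, coordinate-free LP that also handles points below the fixed lower bound via per-vector budgets; what the paper's version buys is an objective expressed purely in the combinatorial variables, which connects more directly to the greedy algorithm and the vertex-cover reduction. One small slip in your closing remark: you have the two failure modes of the threshold reversed. A threshold \emph{larger} than $l/(l+1)$ still preserves feasibility (it cuts fewer coordinates) but degrades the cost ratio $1/(1-t)$ beyond $l+1$, whereas a \emph{smaller} threshold keeps the cost bound but can produce more than $l$ cuts in a row and break feasibility. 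This does not affect the validity of the main argument.
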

The proof uses a linear programming relaxation
of an {\em integer linear program} corresponding to  the $k=0$
variant of the one-sided \textsc{CombCI} 
and the approximation ratio obtained from the solution given by the relaxation. 

Finding a bound that does not depend on $l$ is an interesting open
question, as well as extending the proof to the two-sided
\textsc{CombCI}.  It is unlikely that the problem admits a better
approximation bound than $2$, since this would immediately result in a
new approximation algorithm for the \textsc{Vertex-Cover} problem.
This is because in the proof of Theorem \ref{thr:app} 
we describe a simple
reduction from \textsc{Vertex-Cover} to \textsc{CombCI} with $l=1$ and
$k=0$.  This reduction preserves approximability, as it maps the
\textsc{Vertex-Cover} instance to an instance of our problem in a
straightforward manner.  For \textsc{Vertex-Cover} it is known that
approximation ratios below $2$ are hard to obtain in the general case.
Indeed, the best known bound is equal to $2 -
\Theta(1/\sqrt{\log|V|})$ \cite{Karakostas2009}.

\section{Algorithms}
\label{sec:alg}

We present two algorithms for  
$(k,l)$-confidence~areas.

\subsection{Minimum intervals ({\sc mi})}
Our first method is based on {\em minimum intervals}. 
A standard approach to define a confidence interval for univariate data is to 
consider the minimum length interval that contains $(1-a)$\% of the observations.
This can be generalized for multivariate data by
treating each variable independently, i.e.,
for a given $a$,
we set $x_l(j)$ and $x_u(j)$
equal to lower and upper limit of
such a minimum length interval
for every $j$. 
The {\sc mi} algorithm solves the \textsc{CombCI} 
problem for given $k$ and $l$ by adjusting the parameter $a$ such that the resulting $(x_u, x_l)$ satisfies the constraint 
in Eq.~\eqref{eq:cciconstraint} in the training data set. 
The resulting confidence area may have a larger size than the optimal 
solution, since all variables use the same $a$.
Note that the {\sc mi} solution differs from the naive 
solution mentioned in the introduction because the naive solution does not adjust $a$ but 
simply sets it to $a=k/n$.
The time complexity of {\sc mi} with binomial search for correct $a$
is $O(mn\log{k}\log{n})$.

\begin{figure*}[t]
\centering
\includegraphics[width=0.999\textwidth]{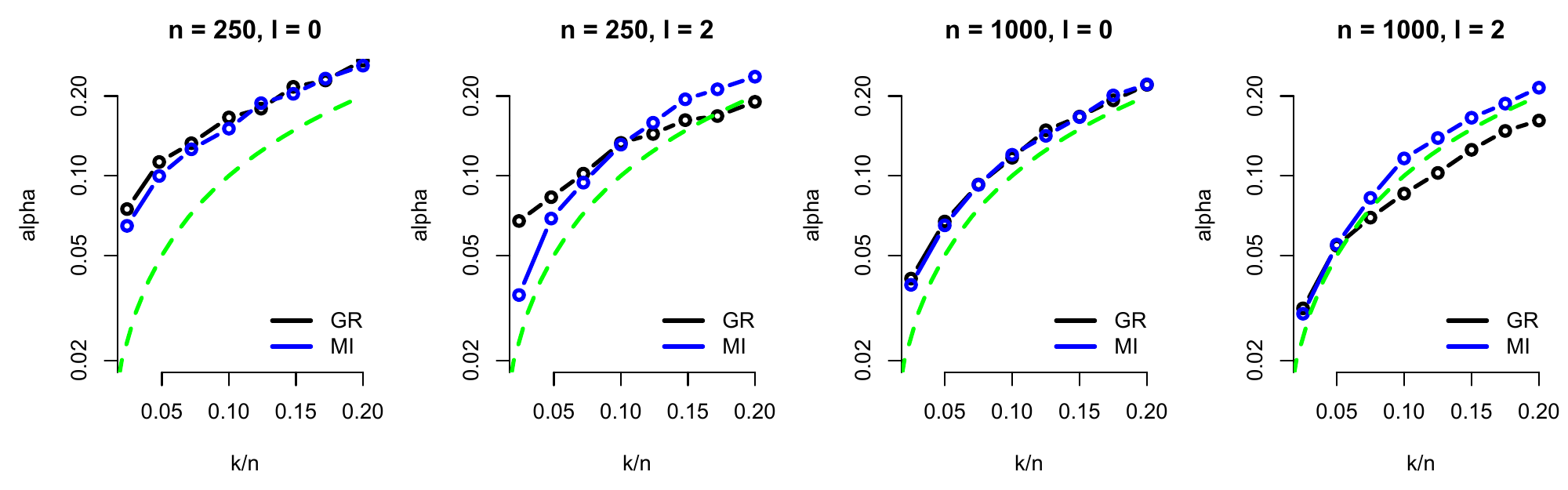}
\caption{Comparison of $k/n$ used when fitting the
  confidence interval on training data and the observed level of $\alpha$ in a separate test data. 
  Both data are normal with $m=10$, $n_{test} = 1000$, $n_{train} = \{250,1000\}$
  and $l = \{0,2\}$. The dashed green line indicates $k/n = \alpha$. 
  Shown are the averages over 25
  independent trials over different randomly generated training and
  test instances. Note the
  log-scale on the vertical axis.}
\label{fig:n_l_comparison}
\end{figure*}

\subsection{Greedy algorithm ({\sc gr})}
Our second method is a greedy algorithm. 
The greedy search could be done either {\em bottom-up} (starting from an empty set of included data vectors and then adding $n-k$ data vectors) or {\em
  top-down} (starting from $n$ data vectors and  by excluding $k$  data vectors). 
Since typically $k$ is smaller than $n-k$ 
we will consider here a top-down greedy algorithm.

The idea of the greedy algorithm 
is to start from the envelope of whole data
and sequentially find $k$ vectors to exclude by selecting at each
iteration the vector whose removal reduces the envelope the largest
amount.
In order to find the envelope wrt.\ the relaxed condition allowing $l$
positions from each vector to be outside, at each iteration the set of
included data points needs to be (re)computed. 
This is done
by implementing a greedy algorithm solving the
 \textsc{CombCI} problem for $k=0$. Here one removes individual points that result in maximal decrease in the envelope size so that at most $l$ points from each data vector are be removed, thus obtaining the  envelope wrt.\ the $l$ criterion.
 After this envelope has been computed, the data vector whose
exclusion yields a maximal decrease in the size of the confidence area
is excluded.  
For this, a slightly modified variant of the
greedy MWE algorithm from
\cite{KorpelaPG14} with $k=1$ is used.
After $k$ data vectors have been excluded, the final set of points 
included in the envelope is returned.
The time complexity of {\sc gr}  is $O(mkn\log{n})$.

\section{Empirical evaluation}
\label{sect:experiments}
We present here empirical evaluation of the algorithms. 
In the following {\sc mi} and {\sc gr} refer to the Minimum
intervals and Greedy algorithm, respectively.

\subsection{Datasets}
We make experiments using various datasets that reflect different
properties  
of interest from the point of view of fitting
confidence areas. In particular, we want to study the effects of
correlation (autocorrelation in the case of time-series or regression
curves), number of variables, and distributional qualities.

\squishlist
\item {\bf Artificial data.}  We use artificial multivariate
  (i.i.d.~variables) datasets sampled from the uniform distribution
  (in the interval $[0,1]$), the standard normal distribution, and the
  Cauchy distribution (location parameter $0$, scale parameter $\gamma
  = 2$), with varying $n$ and $m$ to study some theoretical properties
  of multivariate confidence areas.

\item {\bf Kernel regression data.}  We use the Ozone and
  South African heart disease (SA heart) datasets (see, e.g.,
  \cite{eslbook})  
  to compute  
  confidence
  areas for bootstrapped kernel regression estimates.  We use a simple
  Nadaraya-Watson kernel regression estimate to produce the vectors
  $X$, and fit confidence intervals to these using our algorithms.  By
  changing the number of bootstrap samples we can vary $n$, and by
  altering the number of points in which the estimate is computed we
  can vary $m$.

\item {\bf Stock market data.}  We obtained daily closing prices of $n
  = 400$ stocks for years 2011--2015 
  ($m = 1258$ trading
  days).  The time-series are normalized to 
  reflect the
  absolute change in stock price with respect to the average price of
  the first five trading days in the data. The data is shown in
  Fig.~\ref{fig:stock2}.
\squishend

\vspace{-2ex}
\subsection{Finding the correct $\mathbf{k}$}
\label{sec:match}
Our algorithms both solve the {\sc CombCI} problem
(Problem~\ref{prob:cci}) in which we must specify the number of vectors
$k$ that are allowed to violate 
the confidence area.  To obtain a matching $\alpha$ in
{\sc ProbCI} (Definition \ref{def:match}) we must choose the parameter
$k$ appropriately. 
We study here how this can be achieved.

\begin{figure}[t]
  \includegraphics[width=\columnwidth]{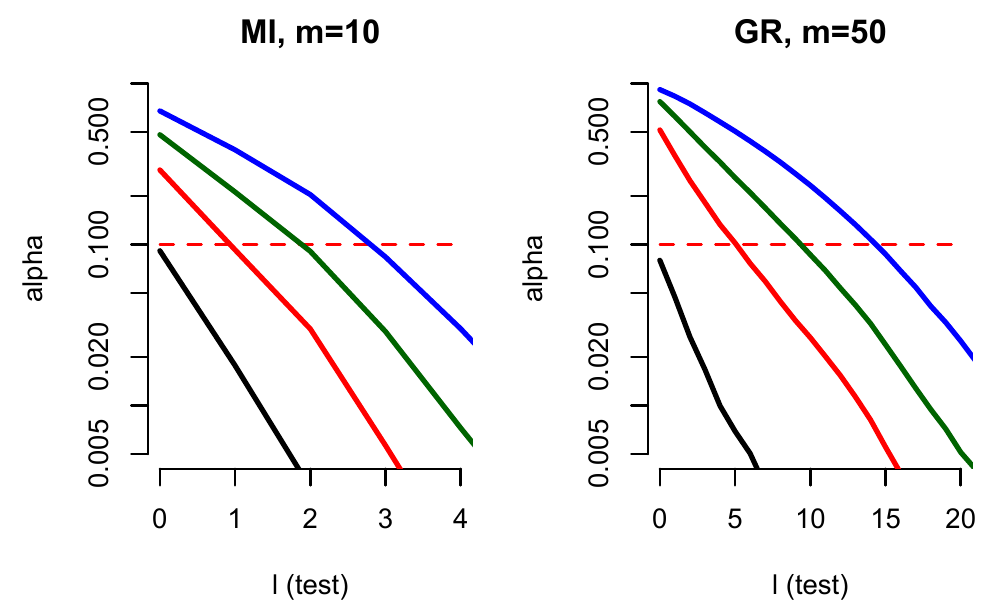}
\caption{Effect of the value of $l_{\test}$ on the observed $\alpha$
  for different values of $l_{\train}$. On the left $l_{\train}$ is
  equal to $0$ (black), $1$ (red), $2$ (green) and $3$ (blue), while
  on the right $l_{\train}$ is $0$ (black), $5$ (red), $10$ (green),
  and $15$ (blue).  In every case the confidence area was trained to
  have $\alpha = 0.1$ for the given $l_{\train}$.}
\label{fig:l_vs_alpha}
\end{figure}

Fig.~\ref{fig:n_l_comparison} shows $\alpha$ as a function of $k/n$
in data with $m = 10$ independent normally distributed variables for
four combinations of $n$ and $l$.  The dashed green line shows the
situation with $k/n = \alpha$.  (This is a desirable property as it
means fine-tuning $k/n$ is not necessary to reach some particular
value of $\alpha$.)  We observe from Fig.~\ref{fig:n_l_comparison}
that when the data is relatively small ($n=250$), for a given $k/n$
both {\sc gr} as well as {\sc mi} tend to find confidence areas that
are somewhat too narrow for the test data leading to $\alpha > k/n$.
To obtain some particular $\alpha$, we must thus set $k/n$ to a lower
value.  For example, with $n=250$ and $l=0$, to have $\alpha = 0.1$ we
must let $k/n = 0.05$.  As the number of training examples is
increased to $n=1000$, we find that both algorithms are closer to the
ideal situation of $k/n = \alpha$.  Interestingly, this happens also
when when $l$ increases from $l=0$ to $l=2$.  The relaxed notion of
confidence area we introduce in this paper is thus somewhat less prone
to overfitting issues when compared against the basic confidence areas
with $l=0$ of \cite{KorpelaPG14}.  On the other hand, for $n=1000$ we
also observe how {\sc gr} starts to ``underfit'' as $k/n$ increases,
meaning that we have $\alpha < k/n$.  Errors in this direction,
however, simply mean that the resulting confidence area is
conservative and will satisfy the given $k/n$ by a margin.

The dependency between $k/n$ and $\alpha$ and other observations made
above are qualitatively identical for uniform and Cauchy distributed
data (not shown).

\subsection{Effect of $l$ on $\alpha$ in test data}
Note that the value of $\alpha$ that we compute for a given confidence
area also depends on the value of $l$ used when {\em evaluating}.  We
continue by showing how $\alpha$ depends on the value of $l$ 
used when evaluating the confidence area on test data.  In this
experiment we train confidence areas using the Ozone data (with
$n=500$ and $m=10$ or $m=50$) and adjust $k/n$ so that we have $\alpha
= 0.1$ for a given value of $l_{\train}$, where $l_{\train} \in \{ 0,
0.1m, 0.2m, 0.3m\}$ is the parameter given to our algorithms ({\sc
  mi}, {\sc gr}) to solve Prob.~\ref{prob:cci}.  Then we estimate
$\alpha$ for all $l_{\test} \in \{1, \ldots, m\}$ using
Eq.~\eqref{eq:p1} and previously unseen test data set.

Results are shown in Fig.~\ref{fig:l_vs_alpha}.  We find that in every
case the line for a given value of $l_{\train}$ intersects the thin
dashed (red) line (indicating $\alpha = 0.1$) at the correct value
$l_{\test} = l_{\train}$.  More importantly, however, we also find
that $l_{\test}$ has a very strong effect on the observed $\alpha$.
This means that if we relax the condition under which an example still
belongs in the confidence area (i.e., increase $l_{\test}$), $\alpha$
drops at a very fast rate, meaning that a confidence area trained for
a particular value of $l_{\train}$, will be rather conservative when
evaluated using $l_{\test} > l_{\train}$.  Obviously the converse
holds  
as well, 
i.e., when $l_{\test} < l_{\train}$, the
confidence area will become much too narrow very fast.  This implies
that $l_{\train}$ should be set conservatively (to a low value) when
it is important to control the false positive rate, e.g., when the
``true'' number of noisy dimensions is unknown.

\subsection{Algorithm comparison}
Next we briefly compare the {\sc mi} and {\sc gr} algorithms in terms
of the confidence area size (given as $A/m$, where $m$ is the number
of variables) and running time $t$ (in seconds).  Results for artificial
data sets as well as the two regression model data are shown in
Table~\ref{table:results_table}.  The confidence level (in test data)
was set to $\alpha = 0.1$ in every case, and all training data had $n
= 500$ examples.  All numbers are averages of 25 trials.  We can
observe, that {\sc mi} tends to produce confidence areas that are
consistently somewhat smaller than those found by {\sc gr}.  Also,
{\sc mi} is substantially faster, albeit our implementation of {\sc
  gr} is by no means optimized.  Finally, the $k/n$ column shows the
confidence level that was used when fitting the confidence area to
obtain $\alpha = 0.1$.  Almost without exception, we have $k/n <
\alpha$ for both algorithms, with {\sc mi} usually requiring a
slightly larger $k$ than {\sc gr}.  Also worth noting is that for
extremely skewed distributions, e.g., Cauchy, the confidence area
shrinks rapidly as $l$ is increased from zero.

\begin{table}
\begin{footnotesize}
\begin{tabular}{@{}l@{ }|r@{\ \ }r|r@{\ \ }r@{\ \ }r|r@{\ \ }r@{\ \ }r}
\multicolumn{3}{c|}{} & \multicolumn{3}{c|}{{\sc mi}} & \multicolumn{3}{c}{{\sc gr}} \\
dataset  & $l$ & $m$ & $A/m$ & $t$ & $k/n$ & $A/m$ & $t$ & $k/n$ \\
\hline
unif &  0 &  10 &   1.0 &   0.1 & 0.06 &   1.0 &   0.7 & 0.07\\
unif &  2 &  10 &   0.9 &   0.2 & 0.08 &   0.9 &  14.0 & 0.11\\
unif & 10 & 100 &   0.9 &   0.7 & 0.05 &   0.9 &  32.1 & 0.03\\
unif & 50 & 500 &   0.9 &   2.1 & 0.03 &   0.9 & 220.2 & 0.03\\
\hline
norm &  0 &  10 &   5.1 &   0.1 & 0.07 &   5.3 &   0.8 & 0.07\\
norm &  2 &  10 &   3.2 &   0.2 & 0.09 &   3.2 &  12.0 & 0.09\\
norm & 10 & 100 &   3.6 &   0.8 & 0.06 &   3.6 &  33.1 & 0.03\\
norm & 50 & 500 &   3.5 &   2.1 & 0.03 &   3.5 & 209.7 & 0.03\\
\hline
Cauchy &  0 &  10 & 262.3 &   0.1 & 0.08 & 325.1 &   0.7 & 0.07\\
Cauchy &  2 &  10 &  21.8 &   0.1 & 0.09 &  25.6 &  10.7 & 0.08\\
Cauchy & 10 & 100 &  36.8 &   0.7 & 0.07 &  38.5 &  31.4 & 0.03\\
Cauchy & 50 & 500 &  30.5 &   2.2 & 0.05 &  31.5 & 203.3 & 0.03\\
\hline
ozone &  0 &  10 &   5.1 &   0.1 & 0.08 &   5.1 &   0.8 & 0.07\\
ozone &  2 &  10 &   3.4 &   0.1 & 0.09 &   3.5 &   9.0 & 0.09\\
ozone &  5 &  50 &   4.2 &   0.3 & 0.09 &   4.3 &  11.4 & 0.05\\
ozone & 15 &  50 &   3.0 &   0.3 & 0.09 &   3.1 &  48.6 & 0.06\\
\hline
SA heart &  0 &  10 &   5.3 &   0.1 & 0.06 &   5.4 &   0.7 & 0.06\\
SA heart &  2 &  10 &   3.3 &   0.2 & 0.08 &   3.7 &  15.5 & 0.14\\
SA heart &  5 &  50 &   4.3 &   0.2 & 0.07 &   4.5 &   8.6 & 0.04\\
SA heart & 15 &  50 &   2.9 &   0.3 & 0.08 &   3.3 &  66.7 & 0.06\\
\hline
\end{tabular}

\end{footnotesize}
\caption{Comparison between {\sc mi} and {\sc gr}.}
\label{table:results_table}
\end{table}

\begin{figure}[t]
\centering
\includegraphics[width=0.96\columnwidth]{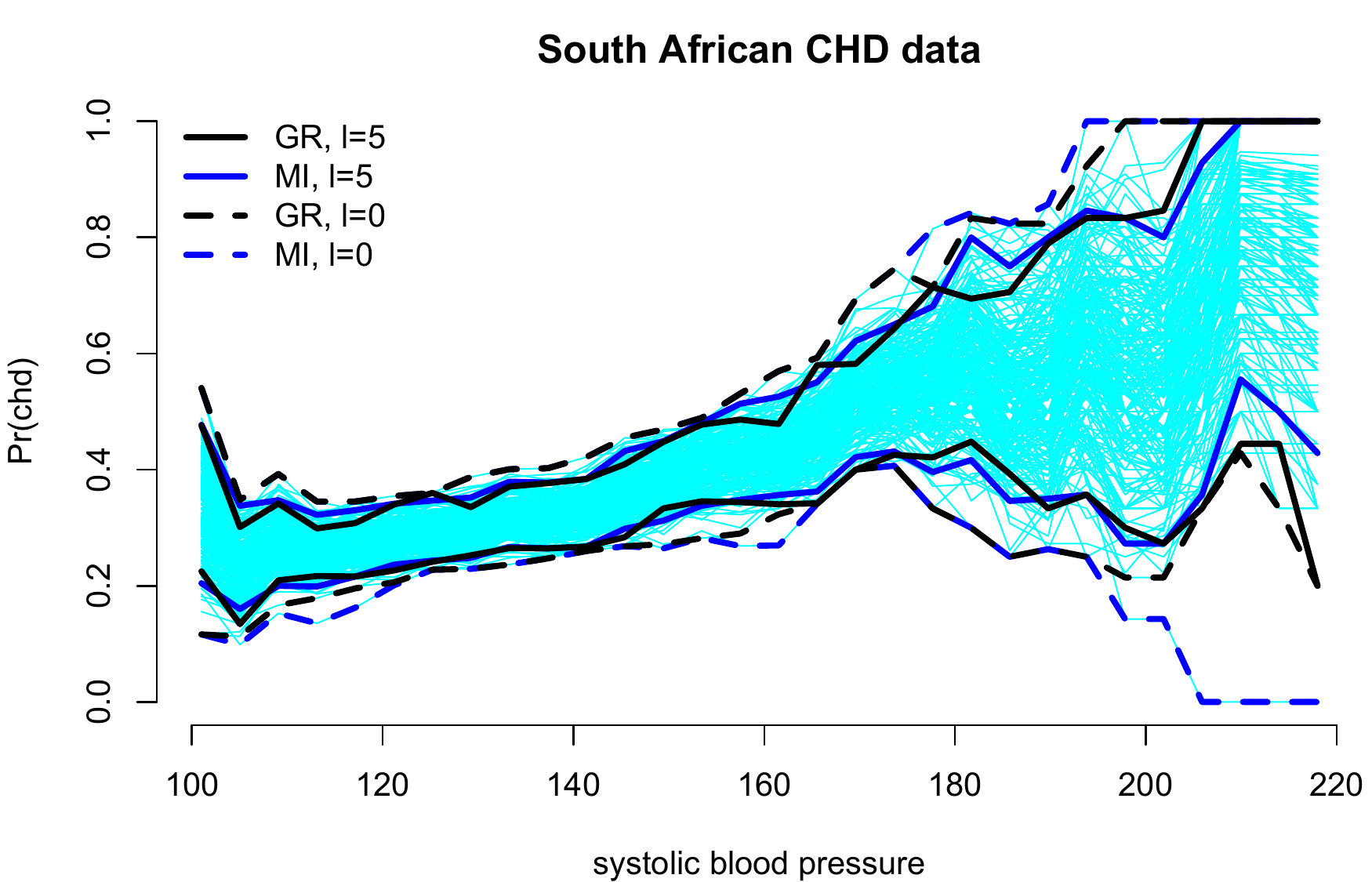}\\
\includegraphics[width=0.96\columnwidth]{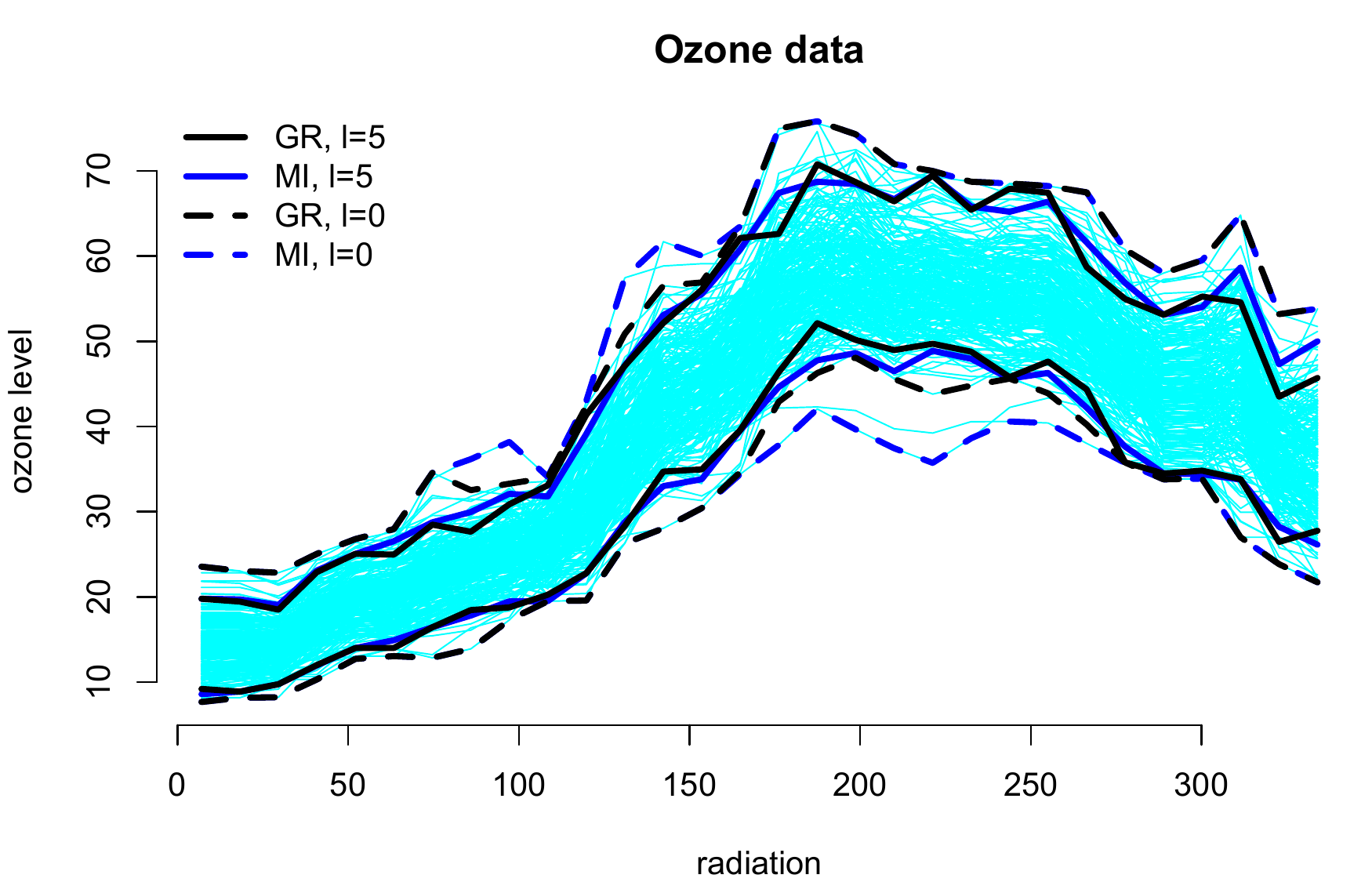}
\caption{\label{fig:kreg_examples}  
  Using {\sc gr} and {\sc mi} on bootstrapped kernel
  regression estimates ($n=250$, $m=30$, $l \in \{0, 5\}$). Top:
  Probability of coronary heart disease as a function of systolic
  blood pressure in the SA heart data.  
  Bottom: Ozone level as a function of observed radiation.}
\end{figure}

\subsection{Application to regression analysis}
We can model the accuracy of an estimate given by a regression model
by resampling the data points, e.g., by the bootstrap method, and then
refitting the model for each of the resampled data sets
\cite{Efron93}. The estimation accuracy or spread of values for a
given independent variable can be readily visualized using confidence
intervals.

Fig.~\ref{fig:kreg_examples} shows examples of different confidence
intervals fitted to bootstrapped kernel regression estimates on two
different datasets using both $l=0$ and $l=5$ ($n=250$ and $m=30$).
In both cases $k$ was adjusted so that $\alpha = 0.1$ in a separate
test data.  We find that qualitatively there is very little difference
between {\sc mi} and {\sc gr} when $l=5$.  For $l=0$, {\sc gr} tends
to produce a somewhat narrower area.  In general, this example
illustrates the effect of $l$ on the resulting confidence area in
visual terms.  By allowing the examples to lie outside the confidence
bounds for a few variables ($l=5$) we obtain substantially narrower
confidence areas that still reflect very well the overall data
distribution.
 
\subsection{Stock market data}
The visualization for the stock market data of Fig.~\ref{fig:stock2}
has been constructed using  
{\sc gr} algorithm with parameters $k=40$
and $l=125$. The figure shows in yellow the  
stocks that are
clearly outliers and among the $k=40$ anomalously behaving observations  
ignored in the construction of the confidence area. The
remaining $n-k=360$ stocks (shown in blue) 
remain within the
confidence area at least $\frac{m-l}{m}=90$\% of the time. However, they are
allowed to deviate from the confidence area 10\% of the
time. Fig.~\ref{fig:stock2} shows several such 
stocks, 
one of them highlighted with red.
By allowing these excursions, the confidence area is smaller and these
potentially interesting deviations are easy to spot.  E.g., the red
line  
represents  
Mellanox Technologies 
and market analyses
from fall 2012 report the stock being overpriced at that time.  The
black line in Fig.~\ref{fig:stock2} 
represents 
Morning\-star,
an example 
staying inside the confidence area.
If we do not allow any deviations outside the confidence intervals,
i.e., we  set $l=0$, then the confidence area will
be larger and such 
deviations might be missed.
\begin{figure*}[t]
  \begin{center}
    \includegraphics[width=0.94\textwidth]{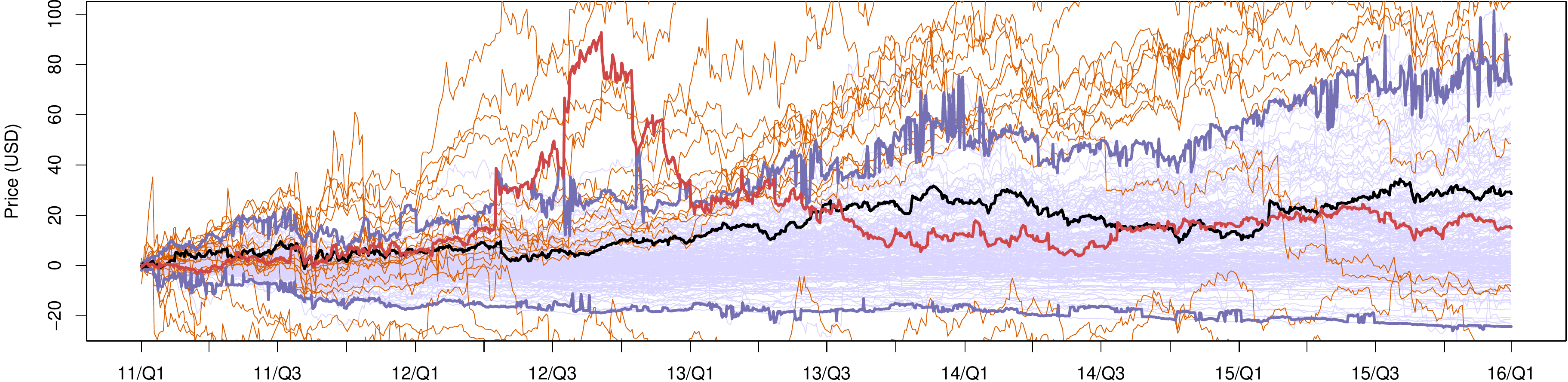}
    \caption{\label{fig:stock2} Visualization of the relative closing
      values of 400 stocks from Jan.~2011 to Dec.~2015 (1258 days) compared
      to the starting days. 
      The confidence area
      with parameters $k=40$ and $l=125$ is shown with blue lines.  An
      example of a valuation of a stock that temporarily deviates (in
      less than $l$ time points) from the confidence area is shown in
      red, and an example of a valuation for a stock observing
      ``normal'' development is shown in black.}
  \end{center}
\end{figure*}

\section{Concluding remarks}
\label{sec:concl}
The versatility of confidence intervals stems from their
simplicity. They are easy to understand and to interpret, and therefore
often used in presentation and interpretation of multivariate
data. The application of confidence intervals to multivariate data is,
however, often done in a naive way, disregarding effects of
multiple variables. This may lead to false interpretations of
results if the user is not being careful.

In this paper we have presented a generalization of confidence
intervals to multivariate data vectors. The generalization is simple
and understandable and does not sacrifice the interpretability of
one-dimensional confidence intervals. 
The confidence areas defined this way behave intuitively and  
offer insight into the data.
The problem of finding confidence areas is computationally hard. We
present two efficient algorithms to solve the problem and show that
even a rather simple approach ({\sc mi}) can produce very good
results.

Confidence
intervals are an intuitive and useful tool for visually presenting and
analyzing data sets, spotting trends, patterns, and outliers. The
advantage of confidence intervals is that they give at the same
time information about both the statistical significance of the result
and size of the effect. 
In this paper,  we have shown several examples
demonstrating the usefulness of multivariate confidence intervals, i.e. confidence areas.

In addition to visual tasks, the confidence intervals can also be used
in automated algorithms as a simple and robust distributional
estimators. As the toy example of Fig. \ref{fig:toy} shows, the
confidence areas with $l>0$ can be a surprisingly good
distributional estimator, if data is sparse, i.e., a majority of
variables is close to the mean and in each data vector only a small
number of variables have significant deviations from the average.

With p-values there are established procedures to deal with 
multiple hypothesis testing. Indeed, a proper
treatment of the multiple comparisons problem is required, e.g., in
scientific reporting. 
Our contribution to the discussion about reporting scientific results is
to point out that it is indeed possible to treat multidimensional data
with confidence intervals in a principled~way.

\bibliographystyle{abbrv}
\begin{small}
\bibliography{fdrbands}
\end{small}
\appendix

\section{Appendix}
\label{sec:appendix}
\subsection{Proof of Theorem 4.1}

The case of $k>0$ follows directly from \cite[Theorem 3]{KorpelaPG14}.
In the case $k=0$ we use a reduction from \textsc{Vertex-Cover}. In
the \textsc{Vertex-Cover} problem we are given a graph $G$, and an
integer $K$, and the task is to cover every edge of $G$ by selecting a
subset of vertices of $G$ of size at most $K$.  A reduction from
\textsc{Vertex-Cover} to the \textsc{CombCI} problem maps every edge
$e = (a,b)$ of the input graph $G$ into the vector $x_e$, where
$x_e(a) = x_e(b) = 1$, and $x_e(w) = 0$ when $w \not\in \{a, b\}$.
Furthermore, we add two vectors $x_a$ and $x_b$ which satisfy
$x_a(1)=x_b(2)=-m-1$ and which have a value of zero otherwise.
Moreover, we set $k = 0$ and $l = 1$.

For the optimal confidence area, we must allow the first element of
vector $a$ and the second element of vector $b$ be outside the
confidence area, resulting to a lower bound $x_l(j)=0$ for all
$j$. Thus, it suffices to consider a variant of the problem where the
input vectors $x_i$ are constrained to reside in $\{0,1\}^m$ and to
seek an upper bound $x_u \in \{0,1\}^m$.  To minimize the size of the
confidence area we need to minimize the number of $1$s in $x_u$.  We
consider a decision variant of the \textsc{CombCI} problem where we
must decide if there exists an $x_u$ with $\sum_j x_u(j) \leq K$ for a
given integer $K$.  An optimal vertex cover is obtained simply by
selecting the vertices $j$ with $x_u(j) = 1$ in the optimal upper
bound $x_u$.

\subsection{Proof of Theorem 4.2}

To prove Theorem 4.2 we use a linear programming relaxation
of an {\em integer linear program} (ILP) corresponding to the $k=0$
variant of the one-sided \textsc{CombCI} which we define as follows.
Let $X$ be a matrix representing the data vectors $x_1,\ldots, x_n$.
For the moment, consider the $j$th position of every vector $x_i$ in
$X$, and let this be {\em sorted in decreasing order} of $x_i(j)$.
Let $i^\prime$ denote the vector that follows vector $i$ in this
sorted order.  Moreover, at every position $j$ we only consider
vectors that are {\em strictly above} the lower bound $x_l(j)$.

The ILP we define uses binary variables $q$ to express whether a given
vector is inside the one-sided confidence area.  We have $q_i(j) = 1$
when vector $i$ is {\em contained in the confidence area} at position
$j$, and $q_i(j) = 0$ otherwise.

To compute the size of the confidence area, we introduce a set of
coefficients, denoted by $\Delta$.  We define $\Delta_i(j) = x_i(j) -
x_{i^\prime}(j)$ as the difference between the values of vectors $i$
and $i^\prime$ at position $j$.  For the vector $i$ that appears last
in the order, i.e., there is no corresponding $i^\prime$, we let
$\Delta_i(j) = x_i(j) - x_l(j)$, where $x_l(j)$ is the value of the
lower side of the confidence interval at position $j$.  Using this, we
can write the difference between the largest and smallest value at
position $j$ as the sum $\sum_{i=1}^{n} \Delta_i(j)$, and the size of
the envelope that contains all of $X$ is equal to $\sum_{j=1}^m
\sum_{i=1}^n \Delta_i(j)$.

Now, given a feasible assignment of values to $q$, i.e., an assignment
that satisfies the constraints that we will define shortly, we can
compute the size of the corresponding one-sided confidence area using
the sum $\sum_{i=1}^n \sum_{j=1}^m q_i(j)\Delta_i(j)$.  This is the
objective function of the ILP.

We continue with discussing the constraints.  Recall that $i^\prime$
denotes the vector that follows vector $i$ when the vectors are sorted
in decreasing order of their value at position $j$.  First, since
every vector $i$ may violate the confidence area at most at $l$
positions, we must make sure that $\sum_{j=1}^m q_i(j) \geq m-l$ holds
for every $i$.  Moreover, observe that if the vector at position $i$
is within the upper bound at position $j$, i.e., we have $q_i(j) = 1$,
the vector $i^\prime$ must be inside the upper bound at position $j$
as well, since $x_i(j) \geq x_{i^\prime}(j)$.  However, if the vector
$i^\prime$ is below the upper bound, this does not imply anything for
vector $i$.  Therefore, we must have $q_i(j) \leq q_{i^\prime}(j)$ for
all $i$ and $j$.

The resulting ILP is as follows:
\begin{eqnarray}
\min \; \sum_{i,j} q_i(j)\Delta_i(j) && \hbox{st.} \\
\label{eq:qsumconstraint}
\sum_{j=1}^m q_i(j) &\geq& m-l \; \; \forall i \\
\label{eq:qmonotonicity}
q_{i^\prime}(j) - q_{i}(j) &\geq& 0 \; \; \forall i, j \\
q_i(j) &\in& \{0,1\} \; \; \forall i, j.
\end{eqnarray}
There are at most $nm$ variables and $nm + n$ constraints in total in
addition to the binary constraints.  A relaxed version of this ILP is
otherwise equivalent, but allows all variables to take fractional
values in the interval $[0,1]$.  We use standard techniques to design
an approximation algorithm for the one-sided \textsc{CombCI} problem
(with $k=0$) as follows.  The algorithm first computes (in polynomial
time using a suitable LP solver) the optimal fractional solution,
denoted $q^*$, and then rounds this to an integer solution using a
simple threshold value.  We must select the threshold so that the
resulting integer solution is guaranteed to be feasible for the
original ILP.  We define the rounding procedure as follows:
\begin{equation}
\label{eq:qbar}
    \bar q_i(j) = 
\begin{cases}
    1& \text{if } q_i(j)^* \geq \frac{1}{l+1},\\ 0 & \text{otherwise}.
\end{cases}
\end{equation}
The threshold value we use is thus $1/(l+1)$.  To prove that the
algorithm is correct, we must show (i) that the rounding procedure
results in a solution where every vector $i$ is outside the confidence
area at most $l$ times, and (ii) that there are no ``holes'' in the
solution, meaning that if $\bar q_i(j) = 1$, then $\bar
q_{i^\prime}(j)$ must be equal to $1$ as well.  We start with this
latter property.
\begin{Lemma}
\label{lemma:unimodularq}
Consider position $j$, and let all vectors be sorted in decreasing
order of $x_i(j)$.  This order results in a monotonically increasing
sequence of $q_i(j)^*$ values.
\end{Lemma}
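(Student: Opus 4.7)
The plan is to read the claim off directly from the linear programming constraints. Fix a position $j$, and let $i_1, i_2, \ldots, i_n$ be the indices of the data vectors sorted in decreasing order of $x_{i_t}(j)$, so that $i_{t+1}$ is exactly the vector called $i'$ (relative to $i_t$) in the definition preceding the ILP. The key observation is that constraint~\eqref{eq:qmonotonicity}, which is imposed on every feasible LP solution, says
\begin{equation}
\nonumber
q_{i_{t+1}}(j) \;\geq\; q_{i_t}(j) \quad \text{for } t = 1, \ldots, n-1.
\end{equation}
This is precisely the statement that, traversing the sorted order, the sequence of $q(j)$-values is non-decreasing.

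Since the optimal fractional solution $q^{*}$ is by definition a feasible point of the LP relaxation, it inherits all of these inequalities. Applied with $t$ running from $1$ to $n-1$, they chain together to give $q^{*}_{i_1}(j) \le q^{*}_{i_2}(j) \le \cdots \le q^{*}_{i_n}(j)$, which is exactly the monotonicity claim.

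No hard step is involved; the lemma is really just the statement that the constraint~\eqref{eq:qmonotonicity} was designed for. The one small sanity check I would include is that the direction of the inequality matches the direction of the sort: because $x_{i_t}(j) \ge x_{i_{t+1}}(j)$, vector $i_t$ sits \emph{above} $i_{t+1}$ at coordinate $j$, and the intuitive meaning of $q_i(j)$ (an indicator that $i$ is below the upper bound at $j$) then forces $q$ to be no larger at $i_t$ than at $i_{t+1}$, in agreement with~\eqref{eq:qmonotonicity}. Beyond that brief remark, the proof is a single line citing the constraint and the feasibility of $q^{*}$.
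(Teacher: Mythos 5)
Your proof is correct and is essentially identical to the paper's: both simply note that constraint~\eqref{eq:qmonotonicity} holds for every feasible point of the LP relaxation, hence for $q^{*}$, and chaining these inequalities along the sorted order gives the claimed monotonicity. Your added sanity check on the direction of the inequality is a harmless elaboration, not a different argument.
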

\begin{proof}
The proof is a simple observation that since Equation
\eqref{eq:qmonotonicity} holds for any feasible solution, we must have
$q_{i^\prime}(j)^* \geq q_i(j)^*$ for every $i$ and $j$.
\end{proof}

This means that if the rounding procedure of Equation \eqref{eq:qbar}
sets $\bar q_i(j) = 1$, we must have $\bar q_{i^\prime}(j) = 1$ as
well.  The result is also guaranteed to include every vector $i$
within the confidence area at least $m-l$ times.
\begin{Lemma}
\label{lemma:roundingok}
The rounding scheme of Equation \eqref{eq:qbar} gives a solution $\bar
q$ that satisfies the constraint of Equation \eqref{eq:qsumconstraint}
for all~$i$.
\end{Lemma}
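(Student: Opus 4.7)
The plan is a standard threshold-rounding counting argument, applied row by row. Fix an arbitrary index $i$, and let $S_i = \{j : q_i^*(j) < 1/(l+1)\}$ be the set of positions at which the rounding procedure of Equation \eqref{eq:qbar} sends $q_i^*(j)$ to zero. Since $\bar q_i(j)=1$ exactly on the complement of $S_i$, the constraint $\sum_{j=1}^m \bar q_i(j) \ge m-l$ holds if and only if $|S_i| \le l$. So it suffices to prove $|S_i|\le l$ for every $i$.

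I would argue this by contradiction. Suppose $|S_i| \ge l+1$. I then split $\sum_{j=1}^m q_i^*(j)$ into the contribution from $S_i$ and from its complement, and bound each part. On $S_i$, every term satisfies $q_i^*(j) < 1/(l+1)$, so the contribution is strictly less than $|S_i|\cdot \frac{1}{l+1}$. On the complement, which has size $m-|S_i|$, I use the trivial bound $q_i^*(j)\le 1$ coming from the LP relaxation. Adding the two pieces gives
\begin{equation}
\nonumber
\sum_{j=1}^m q_i^*(j) \;<\; \frac{|S_i|}{l+1} + (m - |S_i|)
\;=\; m - |S_i|\,\frac{l}{l+1}.
\end{equation}
Using $|S_i|\ge l+1$ on the right-hand side yields $\sum_{j=1}^m q_i^*(j) < m - l$, contradicting the LP constraint \eqref{eq:qsumconstraint}, which is satisfied by the optimal fractional solution $q^*$. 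Hence $|S_i|\le l$, as required.

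I do not foresee any real obstacle here: the threshold $1/(l+1)$ is chosen precisely to make the above counting tight, and no structural properties of the $\Delta_i(j)$ coefficients or of the ordering of the rows are needed for this particular lemma (those are used only to establish feasibility via Lemma \ref{lemma:unimodularq} and to analyze the approximation ratio). The only small care needed is to keep the inequality strict in the bound on $S_i$ so that the contradiction with the $\ge$ in \eqref{eq:qsumconstraint} is clean.
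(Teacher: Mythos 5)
Your proof is correct and rests on the same idea as the paper's: the threshold $1/(l+1)$ works because the row-sum constraint $\sum_j q_i^*(j)\ge m-l$ together with the bound $q_i^*(j)\le 1$ forbids more than $l$ entries from falling below $1/(l+1)$. The paper phrases this as an informal worst-case argument ($m-l-1$ ones plus $l+1$ entries summing to $1$), whereas your contradiction-based counting is a cleaner and more complete write-up of the same reasoning, correctly handling the strict inequality needed to clash with the $\ge$ in Equation \eqref{eq:qsumconstraint}.
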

\begin{proof}
We must find a worst-case distribution of values in $q_i^*$ that
requires a very small threshold value so that the rounded variables
satisfy the constraint.  Suppose that there are $m-l-1$ ones in
$q_i^*$.  Since the total sum of all elements in $q_i^*$ is at least
$m-l$, the remaining $l+1$ items must sum to $1$.  The worst-case
situation happens when all of the remaining $l+1$ items are equal to
$1/(l+1)$.  By selecting this as the threshold value, we are
guaranteed to satisfy the constraint.
\end{proof}

Next, we consider the approximation ratio of the solution given by
$\bar q$.  The proof follows standard approaches.
\begin{Lemma}
\label{lemma:approx}
The cost of the rounded solution $\bar q$ is at most $(l+1)$ times the
optimal solution to the original ILP.
\end{Lemma}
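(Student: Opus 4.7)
The plan is to run the standard LP-rounding argument: compare the cost of the rounded integer solution $\bar q$ to the cost $C^*$ of the optimal fractional solution $q^*$, and then use the fact that $C^*$ is a lower bound on the optimum of the original ILP (since the LP is a relaxation). So the whole proof reduces to establishing a pointwise inequality between $\bar q$ and a scaled copy of $q^*$.

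First I would record the relaxation bound: if $C_{\mathrm{OPT}}$ denotes the optimal value of the ILP, then $C^* = \sum_{i,j} q_i^*(j)\Delta_i(j) \le C_{\mathrm{OPT}}$, because any feasible integer solution is also feasible for the LP. Next, I would observe the key pointwise inequality $\bar q_i(j) \le (l+1)\, q_i^*(j)$ for all $i,j$. This is immediate from the rounding rule in Equation~\eqref{eq:qbar}: when $\bar q_i(j) = 0$ the inequality is trivial since $q_i^*(j) \ge 0$, and when $\bar q_i(j) = 1$ the rule guarantees $q_i^*(j) \ge 1/(l+1)$, so $(l+1)q_i^*(j) \ge 1 = \bar q_i(j)$.

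Finally I would combine these two steps, using that $\Delta_i(j) \ge 0$ (which follows from the fact that the vectors at position $j$ are listed in decreasing order of $x_i(j)$, so each $x_i(j) - x_{i'}(j) \ge 0$, and for the last vector in the order the difference $x_i(j) - x_l(j) \ge 0$ because only vectors strictly above $x_l(j)$ are considered). Multiplying the pointwise inequality by $\Delta_i(j)$ and summing over $i,j$ yields
\begin{equation*}
\sum_{i,j} \bar q_i(j)\Delta_i(j) \;\le\; (l+1)\sum_{i,j} q_i^*(j)\Delta_i(j) \;=\; (l+1)\, C^*,
\end{equation*}
and chaining this with $C^* \le C_{\mathrm{OPT}}$ gives the claimed $(l+1)$-approximation.

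The only subtle point I expect, and the one I would be most careful about, is the nonnegativity of the $\Delta_i(j)$: the approximation argument crucially multiplies a pointwise inequality by these coefficients, so if any $\Delta_i(j)$ were negative the bound would flip. I would therefore state explicitly that the sorting convention together with restricting attention to vectors strictly above $x_l(j)$ makes every $\Delta_i(j)$ nonnegative, and that the cost of the one-sided confidence area defined by $\bar q$ really does equal $\sum_{i,j} \bar q_i(j)\Delta_i(j)$ once the feasibility guarantees from Lemmas~\ref{lemma:unimodularq} and \ref{lemma:roundingok} are in hand. Beyond that, the argument is routine.
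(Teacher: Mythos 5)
Your proof is correct and follows essentially the same argument as the paper's: the pointwise bound $\bar q_i(j) \le (l+1)\,q_i^*(j)$ from the rounding rule, summed against the nonnegative coefficients $\Delta_i(j)$, combined with the LP relaxation lower-bounding the ILP optimum. Your explicit note on the nonnegativity of the $\Delta_i(j)$ is a detail the paper leaves implicit, but it does not change the argument.
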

\begin{proof}
By the definition of $\bar q$ in Equation \eqref{eq:qbar}, we must
have $\bar q_i(j) \leq (l+1) q_i(j)^*$.  This implies that
\[
\sum_{i,j} \bar q_i(j) \Delta_i(j) \leq (l+1) \sum_{i,j} q_i(j)^* \Delta_i(j).
\]
The cost of $\bar q$ is thus at most $(l+1)$ times the cost of the
optimal fractional LP, which lower bounds the optimal cost of the ILP.
\end{proof}

Now, using Lemmas \ref{lemma:unimodularq}, \ref{lemma:roundingok},
and~\ref{lemma:approx} together, we have an algorithm that implies
Theorem 4.2.

\renewcommand{\figurename}{Algorithm}
\renewcommand\thefigure{\thesection.\arabic{figure}}    
\setcounter{figure}{0} 

\begin{figure}[h]
\caption{
Greedy algorithm, top-down.
\label{algo:greedy}}
\begin{algorithmic}[1]
\STATE{{\bf input:} dataset $X\in\mathbb{R}^{N\times M}$, integers $K$,$L$}
\STATE{{\bf output:} $Z\in \{0,1\}^{N\times M}$ indicating included points} 
\STATE{$I\leftarrow \{1,\ldots,N\}$
\STATE{$Z\leftarrow$zeros$(N,M)$}}
\FOR{$k=1\ldots K$}
\STATE\label{ln:fe}{$R\leftarrow$FindEnvelope$(X, I, L)$}
\STATE{$I\leftarrow I\setminus\{$ExcludeObservation$(X,R)\}$}
\ENDFOR
\STATE\label{ln:felast}{$R\leftarrow$FindEnvelope$(X, I, L)$}
\STATE{$Z(i,j)\leftarrow 1$  for all points $(i,j)$ included in $R$}
\RETURN{$Z$}
\end{algorithmic}
\end{figure}

\subsection{Greedy algorithm}

The idea of the greedy algorithm, presented in
Algorithm~\ref{algo:greedy}, is to start from the whole data envelope
and sequentially find $k$ vectors to exclude by selecting at each
iteration the vector whose removal reduces the envelope the largest
amount. %
In order to
find the envelope wrt.\ the relaxed condition that allows $l$
positions from each vector to be outside, at each iteration the set of
included data points needs to be (re)computed (line \ref{ln:fe}).  This is
realized in function FindEnvelope (details presented in Algorithm \ref{algo:envelope}), which essentially
implements a greedy algorithm solving the \textsc{CombCI} problem for
$k=0$.  After the envelope has been computed, the data vector whose
exclusion yields a maximal decrease in the size of the confidence area
is excluded. The function ExcludeObservation (details presented in Algorithm \ref{algo:rmo})  
is a variant of as the greedy MWE algorithm from
\cite{KorpelaPG14} with $k=1$, and with the ordering structure $R$ provided.
After $k$ data vectors have been excluded FindEnvelope is
called for the final set of data vectors in line \ref{ln:felast} and matrix indicating included points is returned.

\begin{figure}
\caption{
FindEnvelope$(X, I, L)$
\label{algo:envelope}}
\begin{algorithmic}[1]
\STATE{{\bf input:} dataset $X\in\mathbb{R}^{N\times M}$, 
$I\subseteq 2^N$, integer $L$}
\STATE{{\bf output:} ordering structure $R$}
\STATE{rmd$\leftarrow$zeros($N$)}
\STATE\label{line:foo}{$U\leftarrow$list()}
\STATE\label{line:orderX}{$R\leftarrow$ordering structure for observations in $X[I,:]$}
\FOR{$j=1\ldots M$}\label{line:forstart}
    \FOR{$b=0\ldots1$}
          \STATE\label{line:v}{$v\leftarrow  |X[$1st$(R,j,b),j]-X[$2nd$(R,j,b),j]|$}
          \STATE{$U$.add(key=$v$, col=$j$, up=$b$)}
   \ENDFOR
\ENDFOR\label{line:forend}
\STATE\label{line:prio}{$G\leftarrow$priorityQueue$(U)$}
\WHILE{$G \ne\emptyset$ and $\sum_i$rmd$(i)<N\cdot L$}\label{line:start_while}
   \STATE \label{line:getmax}{$(v, j, b)\leftarrow$getMaximumElement($G$)}
   \STATE{$i\leftarrow$1st$(R, j, b)$}
    \IF{rmd$(i)<L$}
      \STATE\label{line:removeij}{$R\leftarrow$remove1st$(R,j,b)$}
      \STATE\label{line:updatermd}{rmd($i$)$\leftarrow$rmd($i$)+1}
      \STATE\label{line:updatevalue1}{$v\leftarrow |X[$1st$(R,j,b),j]-X[$2nd$(R,j,b),j]|$}
      \STATE\label{line:updatevalue2}{$G$.add(key=$v$, col=$j$, up=$b$)}
     \ENDIF
\ENDWHILE\label{line:end_while}
\RETURN{$R$}
\end{algorithmic}
\end{figure}

The function FindEnvelope, presented in
Algorithm~\ref{algo:envelope}, identifies which $l$ points from each
data vector to leave outside of the confidence area in order to obtain
a confidence area of minimal size.  To efficiently maintain
information about the ordering of the values in columns an ordering
structure $R$ is used (line \ref{line:orderX}).  Each element of $R$
is a doubly-linked list $R_j$  storing the ordering information
for column $j$, with the first element corresponding to the row index
of the smallest element in $X[:, j]$.  Several functions are
associated with $R$. The function 1st$(R,j,b)$ returns the row index
of the smallest ($b=0$) or the largest ($b=1$) value for column $j$ in
$X$, and, similarly, function 2nd$(R,j,b)$ returns the row index of
the second smallest or the second largest value.  Furthermore, function
remove1st$(R,j,b)$ removes the first ($b=0$) or the last element ($b=1)$
of $R_j$.  Lines \ref{line:forstart}--\ref{line:prio} initialize a
priority queue $G$ maintaining a list of candidate points for
exclusion along with the respective gains, i.e., reductions of
confidence area.

The main part of the search for the points to exclude from the relaxed
confidence area is the while-loop in lines
\ref{line:start_while}--\ref{line:end_while}. At each iteration the
point $x^\star$ leading to maximal decrease in the size of the
confidence area is excluded, if less than $l$ elements have already
been excluded from the data vector in question.  The loop terminates
when there are no positions left in which the point with an extreme
value can be excluded without breaking the vector-wise constraints.

Our approach in the function ExcludeObservation, presented in
Algorithm~\ref{algo:rmo}, is similar to that in \cite[Algorithm
  1]{KorpelaPG14}, i.e., the index of observation to remove is the one
that results in the maximal decrease in the envelope size wrt.\
the current one. 

\begin{figure}[t!]
\caption{ExcludeObservation$(X,R)$
\label{algo:rmo}}
\begin{algorithmic}[1]
\STATE{{\bf input:} dataset $X\in\mathbb{R}^{N\times M}$, ordering structure $R$}
\STATE{{\bf output:} index of observation to exclude $i$}
\STATE{Create hash table $\Delta U$ s.t. a query with previously unused key returns value of zero}
\STATE{ $A\leftarrow\emptyset$}
\FOR{$j=1\ldots M$}
    \FOR{$b=0\ldots1$}
         \STATE{$\Delta U[$1st$(R,j,b)] \leftarrow\Delta U[$1st$(R,j,b)]$}
         \STATE{\ \ \ \ \ \ \ \ \ \ $+|X[$1st$(R,j,b),j]-X[$2nd$(R,j,b),j]|$}
         \STATE{$A\leftarrow A\cup \{$1st$(R,j,b)\}$}
    \ENDFOR
\ENDFOR
\RETURN{arg max$_{a\in A}\Delta U[a]$}
\end{algorithmic}
\end{figure}

Notice that for $l=0$ it is computationally more
efficient to use the greedy algorithm from \cite{KorpelaPG14}, as in
that case there is no need to update the set of data vectors included
in the envelope for each iteration over $k$ as done in
Algorithm~\ref{algo:greedy}.

The greedy algorithm performs well in practice, although it does not
provide approximation guarantees. Consider, e.g., a
setting with $n=5$, $m=2$, $k=0$, and $l=1$ and a data matrix $X$ such
that
\[ X^T = \left( \begin{array}{ccccc}
0 & 2\epsilon & 4\epsilon & 5\epsilon & 7\epsilon \\
3\epsilon & 2\epsilon & 0 & 1 & 1-2\epsilon
\end{array} \right)\]
 with $\epsilon$ arbitrary small. 
  The optimal solution is given by 
\[ Z_\mathsf{opt}^T = \left( \begin{array}{ccccc}
0 & 0 & 1 & 1 & 1 \\
1 & 1 & 0 & 0  & 0
\end{array} \right)\]
and has a cost $4\epsilon$,
while the greedy algorithm gives solution 
\[ Z_\mathsf{alg}^T = \left( \begin{array}{ccccc}
0 & 0 & 1 & 1 &  0\\
1 & 1 & 0 & 0  & 1
\end{array} \right)\]
with a cost of $1-3\epsilon$.

\end{document}